\documentclass[11pt]{article}

\usepackage[utf8]{inputenc}
\usepackage[T1]{fontenc}
\usepackage{lmodern}
\usepackage{microtype}

\usepackage{amsmath}
\usepackage{amssymb}
\usepackage{amsthm}
\usepackage{mathtools}

\usepackage{geometry}
\usepackage{enumitem}

\usepackage[
  colorlinks=true,
  linkcolor=blue,
  citecolor=blue,
  urlcolor=blue
]{hyperref}

\numberwithin{equation}{section}

\newtheorem{theorem}{Theorem}[section]
\newtheorem{proposition}[theorem]{Proposition}
\newtheorem{lemma}[theorem]{Lemma}

\theoremstyle{definition}

\theoremstyle{remark}
\newtheorem{remark}[theorem]{Remark}

\title{A Complete Classification of Low-Order Conservation Laws
for a Generalized Fifth-Order KP Family}

\author{
Nitin Serwa\\Abu Dhabi University
}

\date{}

\begin{document}

\maketitle

\begin{abstract}
A complete classification of local conservation laws with multipliers
of differential order at most two is obtained for a generalized
fifth-order Kadomtsev--Petviashvili family. The classification is
carried out by the direct multiplier method and concerns nonlinear
members of the family, with the conservation laws expressed locally
in the original dependent variable. It is first shown, uniformly in
the parameters, that every multiplier of differential order at most
two reduces to first order. The resulting determining equations yield
one generic case and two exceptional nonlinear cases. In the generic
case, the multipliers involve four arbitrary functions of time. One
exceptional case admits an additional multiplier depending on the
first longitudinal derivative and involves five arbitrary functions
of time, while the other has an enlarged zeroth-order multiplier
family involving eight arbitrary functions of time. Representative conserved densities and spatial fluxes are derived for
all three cases. The generic nonzero densities represent mass and
transverse moment-type quantities. The first exceptional case admits
the longitudinal gradient-energy density $\tfrac12 u_x^2$, whereas
the second admits higher longitudinal moment densities. No multiplier
proportional to $u$ occurs within the classified low-order local
family, and hence no $L^2$-type density arises within this
classification. The corresponding conserved integrals are obtained
under appropriate boundary conditions or sufficient weighted spatial
decay.
\end{abstract}

\medskip

\noindent
\textbf{Keywords:}
conservation laws; conservation-law multipliers; conserved integrals;
generalized fifth-order Kadomtsev--Petviashvili equation; nonlinear
dispersive equations.

\medskip

\noindent
\textbf{Mathematics Subject Classification:}
35Q53; 35L65; 37K05.

\section{Introduction}
\label{sec:introduction}

The Kadomtsev--Petviashvili (KP) equation was introduced to describe
the stability of one-dimensional solitary waves under weak transverse
perturbations \cite{KadomtsevPetviashvili1970}. It has since become a
basic model for weakly nonlinear dispersive waves in two spatial
dimensions and has been studied extensively from the viewpoints of
integrable systems, inverse scattering, solitary waves, and the
Cauchy problem
\cite{AblowitzClarkson1991,Konopelchenko1993,KleinSaut2021}.
Higher-order longitudinal dispersion leads to generalized KP
equations whose analytical properties differ substantially from those
of the standard third-order KP equation. The Cauchy problem for higher-order and fifth-order KP equations has
been considered in
\cite{SautTzvetkov1999,SautTzvetkov2000}. More recently,
unconditional uniqueness for fifth-order KP-type equations has been
studied in \cite{Patterson2025}, while transverse stability questions
for fifth-order solitary waves have been considered in
\cite{Karpman1993}.

In one spatial dimension, fifth-order dispersive equations include
the Lax, Sawada--Kotera, Kaup--Kupershmidt, and Kawahara equations
\cite{Lax1968,SawadaKotera1974,Kaup1980,Kupershmidt1985,Kawahara1972}.
The first three equations are integrable fifth-order equations with
different nonlinear coefficient combinations, whereas the Kawahara
equation models the effect of competing third- and fifth-order
dispersion. Two-dimensional extensions of fifth-order equations have
been studied by inverse spectral and dressing methods
\cite{Konopelchenko1993,DubrovskyTopovskyBasalaev2010}, and
higher-dimensional equations involving extended shallow-water
dispersion have arisen in the modelling of resonant wave phenomena
\cite{HorikisFrantzeskakisMarchantSmyth2021}. A fifth-order
Kawahara--KP equation has also been investigated from the viewpoints
of symmetries, conservation laws, and line solitary waves
\cite{MarquezLozanoGandariasAnco2023}. These works provide the
one- and two-dimensional context for the family considered here; the
one-dimensional reductions are used only for context and not as an
integrability criterion for the two-dimensional equation.

We consider the generalized fifth-order KP family
\begin{equation}
\bigl(
u_t
+c\,u_{xxx}
+d\,u_{xxxxx}
+e\,u\,u_{xxx}
+f\,u_xu_{xx}
+g\,u^2u_x
\bigr)_x
+\sigma u_{yy}
=0,
\label{eq:intro-equation}
\end{equation}
where $d\neq0$ and $\sigma\neq0$. For solutions that are independent
of $y$, equation~\eqref{eq:intro-equation} gives
\begin{equation*}
D_x\left(
u_t
+c\,u_{xxx}
+d\,u_{xxxxx}
+e\,u\,u_{xxx}
+f\,u_xu_{xx}
+g\,u^2u_x
\right)=0.
\end{equation*}
After one integration with respect to $x$, this becomes
\begin{equation}
u_t
+c\,u_{xxx}
+d\,u_{xxxxx}
+e\,u\,u_{xxx}
+f\,u_xu_{xx}
+g\,u^2u_x
=h(t),
\label{eq:one-dimensional-reduction}
\end{equation}
where $h(t)$ is an integration function. Under spatial decay, $h(t)=0$. More generally, we fix the integration
function to be zero when identifying the standard one-dimensional
reductions.

With $c=0$ and $d=1$, equation~\eqref{eq:one-dimensional-reduction}
contains the Lax fifth-order KdV equation for
\begin{equation*}
(e,f,g)=(10,20,30),
\end{equation*}
the Sawada--Kotera equation for
\begin{equation*}
(e,f,g)=(5,5,5),
\end{equation*}
and the Kaup--Kupershmidt equation for
\begin{equation*}
(e,f,g)=(10,25,20),
\end{equation*}
in the sign convention used here
\cite{Lax1968,SawadaKotera1974,Kaup1980,Kupershmidt1985}.
These reductions motivate the nonlinear terms in
\eqref{eq:intro-equation}. They are included only as context: the
classification below is uniform in the parameters and does not use
the integrability of a one-dimensional reduction as a criterion for
the two-dimensional equation. The quadratic convection term $u u_x$ is not included:
the family retains precisely the three nonlinear terms
$u u_{xxx}$, $u_x u_{xx}$, and $u^2 u_x$ occurring in the Lax,
Sawada--Kotera, and Kaup--Kupershmidt reductions, rather than to
represent the full polynomial class closed under translations
$u \mapsto u+\kappa$.

Conservation laws are fundamental in the study of nonlinear partial
differential equations. They give local balance relations and, under
suitable boundary or decay conditions, conserved integrals. They are
also useful in the analysis of solutions and in assessing numerical
approximations. A local conservation law is represented by a
divergence expression that vanishes on all solutions, with conserved
density and spatial fluxes depending locally on the independent
variables, the dependent variable, and finitely many derivatives.
Two conservation laws are regarded as equivalent when they differ by
a locally trivial conservation law.

For equations that do not necessarily possess a Lagrangian
formulation, local conservation laws can be found directly through
their multipliers. The direct multiplier method and the general
multiplier--conservation-law correspondence were developed by Anco
and Bluman
\cite{AncoBluman2002a,AncoBluman2002b}; see also
\cite{BlumanCheviakovAnco2010,Olver1993}. In this approach, a
multiplier is determined by requiring that its product with the
differential equation be a total divergence. The method gives
conservation laws up to local equivalence and does not require the
equation to be variational. Related approaches based on adjoint
symmetries and nonlinear self-adjointness are discussed in
\cite{Ibragimov2011,IbragimovAvdonina2013,Ma2018}.

Classifications of conservation laws have been obtained for various
generalized KdV, KP, Boussinesq, Kawahara, and related
higher-dimensional dispersive equations. In particular,
conservation laws, symmetries, and line soliton solutions have been
studied for generalized KP and Boussinesq equations with power
nonlinearities \cite{AncoGandariasRecio2018}. Conservation laws for a
Kawahara--KP equation were derived in
\cite{MarquezLozanoGandariasAnco2023}, while conservation laws and
exact solutions for another higher-dimensional equation with
higher-order dispersion were considered in
\cite{AliSeadawyHusnine2019}. The present family differs from these
equations through its simultaneous inclusion of the three nonlinear
terms
\[
u\,u_{xxx},
\qquad
u_xu_{xx},
\qquad
u^2u_x,
\]
with independent coefficients. Consequently, its conservation-law
classification requires the nonlinear coefficient space to be
treated without fixing a particular integrable or physically
distinguished reduction.

The purpose of this paper is to give a complete classification of
the low-order local conservation laws of the nonlinear members of
\eqref{eq:intro-equation}. Here, low order means that the associated
multipliers are local in the original dependent variable $u$ and have
total differential order at most two. Conservation laws that become
local only after the introduction of a potential are not included.

The classification has three main features. First, every multiplier
of differential order at most two is shown, uniformly in the
parameters, to be independent of all second-order jet variables and
hence to reduce to first order. Second, the unrestricted first-order
determining system is solved completely. For generic nonlinear
coefficients, the multipliers form a family involving four arbitrary
functions of time. Two exceptional nonlinear cases occur. When
\[
g=0,
\qquad
e=2f,
\qquad
f\neq0,
\]
an additional multiplier depending on the longitudinal derivative
$u_x$ is admitted, and the resulting family involves five arbitrary
functions of time. When
\[
g=0,
\qquad
f=3e,
\qquad
e\neq0,
\]
the zeroth-order multiplier can be cubic in $x$, and the family
involves eight arbitrary functions of time. The two exceptional cases
are explained by structural identities in the corresponding
one-dimensional nonlinear terms. Finally, representative conserved
densities and spatial fluxes are derived for all three cases, together
with the associated conserved integrals under suitable boundary,
periodicity, or spatial decay conditions.

The paper is organized as follows.
Section~\ref{sec:equation} gives the divergence form of the equation
and its equivalence transformations.
Section~\ref{sec:multiplier-framework} recalls the multiplier
framework and fixes the scope of the classification.
Section~\ref{sec:order-reduction} proves the reduction from
second-order to first-order multipliers.
Section~\ref{sec:classification} gives the complete first-order
classification, including the generic and two exceptional cases.
Section~\ref{sec:conserved-currents} presents the corresponding
conserved currents, interprets the conserved densities, and discusses
the associated conserved integrals, while
Section~\ref{sec:conclusion} gives concluding remarks.
\section{The equation and equivalence transformations}
\label{sec:equation}

\subsection{Divergence form}

We consider the generalized fifth-order KP family
\eqref{eq:intro-equation}. Since $d\neq0$ and $\sigma\neq0$, scalings
of $x$, $y$, and $t$ can be used to normalize the coefficient of
$u_{xxxxx}$ inside the $x$-divergence to one and the magnitude of the
coefficient of $u_{yy}$ to one. More precisely, the sign of the
normalized transverse coefficient is
\[
\operatorname{sgn}\left(\frac{\sigma}{d}\right).
\]
Reusing $\sigma$ to denote this normalized sign, and dropping the
tildes on the scaled variables, we work throughout with the
normalized family
\begin{equation}
\bigl(u_t+c\,u_{xxx}+u_{xxxxx}
+e\,u\,u_{xxx}+f\,u_xu_{xx}+g\,u^2u_x\bigr)_x
+\sigma u_{yy}=0,
\qquad
\sigma=\pm1.
\label{eq:normalized-equation}
\end{equation}
Its expanded form is
\begin{equation}
\Delta = u_{tx} + c\,u_{xxxx} + u_{xxxxxx}
+ (e+f)\,u_x u_{xxx} + e\,u\,u_{xxxx} + f\,u_{xx}^2
+ 2g\,u\,u_x^2 + g\,u^2 u_{xx} + \sigma\,u_{yy}.
\label{eq:Delta-expanded}
\end{equation}

The nonlinear terms of \eqref{eq:Delta-expanded} form a second
$x$-derivative. Setting
\begin{equation}
\mathcal{H}[u]
= e\,u\,u_{xx} + \tfrac{1}{2}(f-e)\,u_x^2 + \tfrac{1}{3}g\,u^3,
\label{eq:H-def}
\end{equation}
a direct computation gives
\begin{equation}
D_x^2\,\mathcal{H}[u]
= (e+f)\,u_x u_{xxx} + e\,u\,u_{xxxx} + f\,u_{xx}^2
+ 2g\,u\,u_x^2 + g\,u^2 u_{xx},
\label{eq:H-second-derivative}
\end{equation}
so that
\begin{equation}
\Delta = u_{tx} + c\,u_{xxxx} + u_{xxxxxx}
+ D_x^2\,\mathcal{H}[u] + \sigma\,u_{yy}.
\label{eq:Delta-divergence}
\end{equation}
Equivalently, $\Delta = D_x\mathcal{F}[u] + \sigma\,u_{yy}$, where
\begin{equation}
\mathcal{F}[u]
= u_t + c\,u_{xxx} + u_{xxxxx}
+ e\,u\,u_{xxx} + f\,u_x u_{xx} + g\,u^2 u_x .
\label{eq:F-def}
\end{equation}
The form \eqref{eq:Delta-divergence} is used throughout the
multiplier calculations.
\subsection{Equivalence transformations}

The normalized family \eqref{eq:normalized-equation} is preserved by
the point transformations
\begin{equation*}
x=\lambda\widetilde{x}+x_0,
\qquad
y=\varepsilon\lambda^3\widetilde{y}+y_0,
\qquad
t=\lambda^5\widetilde{t}+t_0,
\qquad
u=\rho\widetilde{u},
\end{equation*}
where
\[
\lambda\rho\neq0,
\qquad
\varepsilon=\pm1.
\]
The transformed coefficients are
\begin{equation*}
\widetilde{c}=\lambda^2c,
\qquad
\widetilde{e}=\rho\lambda^2e,
\qquad
\widetilde{f}=\rho\lambda^2f,
\qquad
\widetilde{g}=\rho^2\lambda^4g,
\end{equation*}
while $\sigma$ is unchanged.

Indeed,
\[
D_x=\lambda^{-1}D_{\widetilde{x}},
\qquad
D_y=\varepsilon\lambda^{-3}D_{\widetilde{y}},
\qquad
D_t=\lambda^{-5}D_{\widetilde{t}},
\]
and the terms $u_{tx}$, $u_{xxxxxx}$, and $u_{yy}$ acquire the common
factor $\rho\lambda^{-6}$. Division by this factor gives the stated
coefficient transformations.

When $e\neq0$, the ratios
\begin{equation*}
\frac{f}{e},
\qquad
\frac{g}{e^2}
\end{equation*}
are invariants of these scalings. Hence the two exceptional nonlinear
cases obtained below are characterized invariantly by
\begin{equation*}
\left(
\frac{f}{e},\frac{g}{e^2}
\right)
=
\left(\frac12,0\right)
\qquad\text{and}\qquad
\left(
\frac{f}{e},\frac{g}{e^2}
\right)
=
(3,0).
\end{equation*}

On the subfamily $g=0$, the translation
\begin{equation*}
u=\widetilde{u}+\kappa
\end{equation*}
also preserves the family. It leaves $e$ and $f$ unchanged and
transforms $c$ according to
\begin{equation*}
\widetilde{c}=c+e\kappa.
\end{equation*}
Consequently, when $g=0$ and $e\neq0$, the coefficient $c$ can be
normalized to zero by taking $\kappa=-c/e$. This translation does not
preserve the stated family when $g\neq0$, since additional nonlinear
terms are then generated.
\section{Conservation-law multipliers}
\label{sec:multiplier-framework}
A local conservation law of \eqref{eq:normalized-equation} is a
continuity equation
\begin{equation}
D_tT+D_xX+D_yY=0
\label{eq:conservation-law}
\end{equation}
holding for all solutions, where the conserved density $T$ and the
spatial fluxes $X,Y$ are local differential functions of
$x,y,t,u$, and finitely many derivatives of $u$. Two conservation
laws are locally equivalent if the difference of their conserved
currents is locally trivial, namely, if it consists of a current that
vanishes on the solution space together with a null divergence.

For a regular normal partial differential equation, every local
conservation law has an equivalent characteristic form
\begin{equation}
D_tT+D_xX+D_yY=Q\,\Delta,
\label{eq:characteristic-form}
\end{equation}
holding identically on the full jet space
\cite{AncoBluman2002a,AncoBluman2002b,
BlumanCheviakovAnco2010,Olver1993}. The differential function $Q$ is
called a conservation-law multiplier. It is local and nonsingular on
the solution space.

A differential function $Q$ is a multiplier if and only if
\begin{equation}
E_u\!\left(Q\,\Delta\right)=0
\label{eq:multiplier-equation}
\end{equation}
holds identically, where
\begin{equation*}
E_u
=
\sum_{i,j,k\geq0}
(-D_x)^i(-D_y)^j(-D_t)^k
\frac{\partial}{\partial u_{x^iy^jt^k}}
\end{equation*}
is the Euler operator with respect to $u$. This condition follows
from the fact that the kernel of the Euler operator consists of total
divergences.

\begin{remark}
\label{rem:adjoint-symmetries}
Conservation-law multipliers are closely related to adjoint
symmetries. Every multiplier satisfies the adjoint-symmetry
determining equation on the solution space. Conversely, an adjoint
symmetry is a conservation-law multiplier only when it also satisfies
the associated Helmholtz-type conditions. The condition
\[
E_u\!\left(Q\Delta\right)=0,
\]
imposed identically in jet space, incorporates both the
adjoint-symmetry determining equation and these additional
conditions. Consequently, solving only the adjoint-symmetry equation
may produce differential functions that do not correspond to local
conservation laws. This distinction clarifies the relation between
the direct multiplier method and approaches based on adjoint
symmetries or nonlinear self-adjointness
\cite{Ibragimov2011,IbragimovAvdonina2013,Ma2018}.
\end{remark}

Equation~\eqref{eq:normalized-equation} is a regular normal scalar
partial differential equation. Indeed, it can be solved explicitly
for the leading derivative $u_{tx}$, and the resulting right-hand
side contains neither $u_{tx}$ nor any of its differential
consequences. Hence the standard multiplier--conservation-law
correspondence applies
\cite{AncoBluman2002a,AncoBluman2002b}. Modulo local equivalence,
nontrivial local conservation laws are in one-to-one correspondence
with multipliers restricted to the solution space. In particular,
\begin{equation*}
Q\big|_{\mathcal{E}}\neq0
\quad\Longrightarrow\quad
[T,X,Y]\neq0,
\end{equation*}
where $\mathcal{E}$ denotes the solution space and $[T,X,Y]$ denotes
the local equivalence class of the conserved current. We therefore
classify the conservation laws by classifying their multipliers.

The determining equation \eqref{eq:multiplier-equation} is imposed as
an identity on the full jet space. In particular, $u_{tx}$ and its
differential consequences are retained during the Euler-operator
calculation. The normal form is used only to specify the admissible
normal jet variables in the multiplier and to establish the
multiplier--conservation-law correspondence.

The differential order of a multiplier is the highest total order of
a derivative of $u$ on which it depends. The classification below is
restricted to nonlinear members of
\eqref{eq:normalized-equation},
\begin{equation*}
(e,f,g)\neq(0,0,0),
\end{equation*}
and to multipliers of differential order at most two that are local
in the original dependent variable $u$. Thus no potential variable
or inverse total derivative is introduced. Conservation laws that are
local only after introducing a potential, point symmetries, and
variational formulations are not considered. The linear member $e=f=g=0$ lies outside the present classification.

\section{Reduction to first order}
\label{sec:order-reduction}

We first show that a multiplier of low differential order cannot
depend on the highest jets, for any values of the coefficients.
Since equation~\eqref{eq:normalized-equation} is solved for the
leading derivative $u_{tx}$, this derivative and its differential
consequences are principal derivatives and are excluded from the
multiplier ansatz. The ansatz is therefore expressed in terms of the
parametric derivatives. At second order, the relevant parametric
derivatives are
\[
u_{xx},
\qquad
u_{xy},
\qquad
u_{yy},
\qquad
u_{yt},
\qquad
u_{tt}.
\]

\begin{theorem}
\label{thm:order-reduction}
Let $Q$ be a multiplier of \eqref{eq:normalized-equation} of the form
\begin{equation}
Q = Q\bigl(x,y,t,u,u_x,u_y,u_t,
u_{xx},u_{xy},u_{yy},u_{yt},u_{tt}\bigr).
\label{eq:second-order-ansatz}
\end{equation}
Then $Q$ is independent of every second-order jet:
\begin{equation}
Q_{u_{xx}} = Q_{u_{xy}} = Q_{u_{yy}}
= Q_{u_{yt}} = Q_{u_{tt}} = 0,
\label{eq:no-second-order}
\end{equation}
so that $Q = Q(x,y,t,u,u_x,u_y,u_t)$. This holds for all
$c,e,f,g$ and $\sigma=\pm1$.
\end{theorem}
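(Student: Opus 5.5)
Our plan is to expand the determining equation $E_u(Q\Delta)=0$ from \eqref{eq:multiplier-equation} and read off its coefficients as polynomials in jet variables of high order. We will not work with the Euler operator term by term. Instead we use the equivalent formulation that $Q$ is a multiplier if and only if the Fréchet derivatives satisfy the Helmholtz-type identity
\[
\ell_\Delta^*(Q)+\ell_Q^*(\Delta)=0
\]
identically on the jet space. Here $\ell_\Delta^*$ is the formal adjoint of the linearization of $\Delta$. By \eqref{eq:Delta-divergence} it equals
\[
\ell_\Delta^* = D_tD_x + cD_x^4 + D_x^6 + \ell_{\mathcal H}^*D_x^2 + \sigma D_y^2 ,
\]
with $\ell_{\mathcal H}^* = e\,D_x^2\circ u + eu_{xx} - (f-e)D_x\circ u_x + g u^2$, up to reordering. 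Since $Q$ has order two, $\ell_\Delta^*(Q)$ contains $D_x^6Q$, which has order eight. The term $\ell_Q^*(\Delta)$ involves at most second derivatives of $\Delta$ times $Q$-derivatives, so its order is also eight. We therefore organize the calculation by the highest pure $x$-derivatives $u_{x^k}$, $k=8,7,\dots$, which do not appear in $Q$ and are parametric.

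The first step isolates the top-order terms, which are linear in $u_{x^8}$. In $D_x^6Q$ the coefficient of $u_{x^8}$ is $Q_{u_{xx}}$. In $\ell_Q^*(\Delta)=D_x^2(Q_{u_{xx}}\Delta)-\dots$, the term $D_x^2\Delta$ contributes $Q_{u_{xx}}u_{x^8}$, and it does so with the same sign. This is the usual feature of even-order operators whose adjoint has the same principal symbol, and it is why the $D_x^6$ parts combine rather than cancel. Hence $2Q_{u_{xx}}=0$. The mixed variables $u_{xy}$ and $u_{yy}$ enter $\ell_Q^*(\Delta)$ as $D_xD_y(Q_{u_{xy}}\Delta)$ and $D_y^2(Q_{u_{yy}}\Delta)$. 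These produce the jets $u_{x^7y}$ and $u_{x^6yy}$, whose only other source is $D_x^6$ acting on $Q$. Comparing these coefficients gives $Q_{u_{xy}}=0$ and $Q_{u_{yy}}=0$, again because the two contributions add.

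For $u_{yt}$ and $u_{tt}$ the same bookkeeping applies. The term $D_x^6Q$ produces $Q_{u_{tt}}u_{x^6tt}$, which is a principal derivative, being a derivative of $u_{tx}$. Because the identity is imposed on the full jet space (as stressed in Section~\ref{sec:multiplier-framework}), we may still treat $u_{x^6tt}$ as an independent coordinate. The term $D_t^2(Q_{u_{tt}}\Delta)$ yields the same jet with an equal coefficient. The two contributions add, which forces $Q_{u_{tt}}=0$, and the same argument with $u_{x^6yt}$ gives $Q_{u_{yt}}=0$. The main obstacle is making sure no other term produces these eighth-order jets with a cancelling coefficient. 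The candidates are products of second derivatives of $Q$ with lower-order jets, and cross terms such as $D_tD_x Q$. We will check that every other appearance is at most of order seven in the relevant direction, or is nonlinear in the lower jets. The cleanest way is to extract the coefficient of each top jet by differentiating the identity with respect to it, since only linear occurrences survive. All the surviving coefficients are linear combinations of $Q_{u_{xx}},\dots,Q_{u_{tt}}$ with nonzero numerical factors (namely $2$), independent of $c,e,f,g,\sigma$. This gives \eqref{eq:no-second-order} uniformly in the parameters.
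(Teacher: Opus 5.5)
Your proposal is correct and is essentially the paper's argument. Writing $E_u(Q\Delta)=\ell_\Delta^*(Q)+\ell_Q^*(\Delta)$ only repackages the two Euler-operator contributions the paper isolates, namely $(-D_x)^6[Q]$ and $(-D)_J[\partial(Q\Delta)/\partial u_J]$. Both you and the paper read off the coefficient $2Q_{u_J}$ of the single eighth-order jets $u_{x^8},u_{x^7y},u_{x^6yy},u_{x^6yt},u_{x^6tt}$, treating principal jets as independent coordinates on the full jet space, with the same order-counting to exclude other contributions and no division by the parameters.
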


\begin{proof}
The determining equation $E_u(Q\Delta)=0$ is an identity in the jet
variables. Let $u_J$ be any one of the five second-order jets in
\eqref{eq:second-order-ansatz}, so that $|J|=2$, and consider in
$E_u(Q\Delta)$ the coefficient of the single jet $u_{J+(6,0,0)}$,
which has total order eight.

Two contributions to this coefficient arise. The Euler operator
contains the term
\[
(-D)_J\left[
\frac{\partial(Q\Delta)}{\partial u_J}
\right].
\]
Since $u_{xxxxxx}$ occurs in $\Delta$ linearly with coefficient one,
$\partial(Q\Delta)/\partial u_J$ contains $Q_{u_J}\,u_{xxxxxx}$, and
applying $(-D)_J$ produces $Q_{u_J}\,u_{J+(6,0,0)}$ with sign
$(-1)^{|J|}=+1$. The Euler operator also contains
\[
(-D_x)^6\left[
\frac{\partial(Q\Delta)}{\partial u_{xxxxxx}}
\right]
=
(-D_x)^6[Q].
\] The term in which all six derivatives act on $Q$
through its dependence on $u_J$ produces $Q_{u_J}\,u_{J+(6,0,0)}$ with
sign $(-1)^6=+1$.

No further contribution to the single jet $u_{J+(6,0,0)}$ occurs.
Since $Q$ has differential order at most two, it can supply a jet of
order eight only through a single differentiation of its dependence on
a second-order jet, as in the two contributions above; any other
Euler term either acts on a jet of $\Delta$ of $x$-order at most four,
yielding single jets of total order at most six, or arises from a
quadratic term of $\Delta$, yielding products of jets rather than the
single jet $u_{J+(6,0,0)}$. Hence the coefficient of $u_{J+(6,0,0)}$
equals $2Q_{u_J}$, and the determining equation forces $Q_{u_J}=0$.
Applying this to each of the five second-order jets gives
\eqref{eq:no-second-order}. No division by the coefficients is used,
so the conclusion is uniform in $c,e,f,g,\sigma$.
\end{proof}
\section{Classification of the multipliers}
\label{sec:classification}

By Theorem~\ref{thm:order-reduction} it suffices to consider
\begin{equation}
Q = Q(x,y,t,u,u_x,u_y,u_t).
\label{eq:first-order-ansatz}
\end{equation}
Throughout this section the equation is nonlinear, namely
\begin{equation*}
(e,f,g)\neq(0,0,0).
\end{equation*}

\subsection{Reduction of the ansatz}

\begin{lemma}
\label{lem:first-order-reduction}
Every multiplier of the form \eqref{eq:first-order-ansatz} of a
nonlinear member of the family \eqref{eq:normalized-equation} has the
form
\begin{equation}
Q = A(t)\,u_x + K(x,y,t).
\label{eq:reduced-form}
\end{equation}
\end{lemma}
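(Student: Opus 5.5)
We work directly with the determining equation $E_u(Q\Delta)=0$ from \eqref{eq:multiplier-equation}, with $Q=Q(x,y,t,u,u_x,u_y,u_t)$ and $\Delta$ written as in \eqref{eq:Delta-divergence}. As in the proof of Theorem~\ref{thm:order-reduction}, we split $E_u(Q\Delta)$ according to its dependence on high-order jets and take coefficients. It is convenient to use the identity
\[
E_u(Q\Delta)=Q_u\Delta+\Delta'^{*}(Q)-\text{(Helmholtz terms)},
\]
or equivalently the standard decomposition of the multiplier condition into the adjoint-symmetry equation $\Delta'^{*}(Q)=0$ on solutions, together with the Helmholtz conditions expressing that the Fréchet derivative of $Q$ (made self-adjoint against $\Delta'$) is symmetric. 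The first-order part is the part we expect to use most.

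\textbf{Step 1: leading jets.} We extract the coefficient of the single jets of order seven, namely $u_{x^6y}$, $u_{x^6t}$ and $u_{x^7}$. These arise from $(-D_x)^6Q$ hitting a first-order jet, and from $(-D)_{J}(Q_{u_J}u_{x^6})$ with $|J|=1$.
\begin{itemize}
\item For $J=y$ and $J=t$ the two contributions come with signs $+1$ and $-1$. They therefore cancel, and the top order yields nothing for these jets.
\end{itemize}
We therefore plan to go down to order six and use the Helmholtz structure instead. For a multiplier $Q$ of a scalar equation, the condition $E_u(Q\Delta)=0$ is equivalent to
\[
\Delta'^{*}(Q)+Q'^{*}(\Delta)=0 .
\]
Here the second term is linear in $\Delta$ and its derivatives up to order one. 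Expanding the jets $u_{x^6}$, $u_{x^6y}$, $u_{x^6t}$, $u_{x^7}$, $u_{tx}$ and so on, which appear linearly through $\Delta$ and $D\Delta$ in $Q'^{*}(\Delta)$, and comparing them with the terms in $\Delta'^{*}(Q)=D_x^6Q+\dots$, should give algebraic conditions on the second derivatives of $Q$.

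\textbf{Step 2: linearity and dependence on $u_y,u_t$.} We then collect the coefficients of quadratic products of high jets, such as $u_{x^6}u_{xy}$ and $u_{x^5y}u_{xx}$, coming from $D_x^6Q$ expanded by the chain rule. This should force all second derivatives $Q_{pq}$ with $p,q\in\{u,u_x,u_y,u_t\}$ to vanish except possibly those involving $u$ through $K$. Next, the coefficients of $u_{x^5y}$, $u_{x^5t}$ and $u_{yyy}$-type terms (from $\sigma D_y^2Q$ versus $Q_{u_y}$ times $D_y\Delta$) force $Q_{u_y}=Q_{u_t}=0$. Finally, comparing the coefficients of $u_{x^6}$ in $\Delta'^{*}(Q)$ with those in $Q'^{*}(\Delta)$ forces $Q_{u}$ to be constant along $u_x$ and $Q_{u_x}$ to depend only on $(x,y,t)$.

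\textbf{Step 3: coefficients of $u_x$ and $u$.} At this point $Q=A(x,y,t)u_x+B(x,y,t)u+K$. Using the nonlinearity $(e,f,g)\neq(0,0,0)$, we kill $B$ from the coefficients of nonlinear terms such as $e\,u\,u_{x^4}$ and $f u_{xx}^2$ multiplied by $B$. Note that $E_u(uD_x^2\mathcal H)$ is not identically zero unless $\mathcal H$ vanishes; this is where nonlinearity is used, and we expect a case split over which of $e,f,g$ is nonzero. We then show $A_x=A_y=0$ by examining the terms $A\,u_x D_x^2\mathcal H$ and $A\sigma u_xu_{yy}$ together with the linear dispersive terms, using the cross-derivative coefficient $u_{xy}u_{x^5}$.

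\textbf{Main obstacle.} The delicate part is the bookkeeping in Steps 2–3: cancellations between the Euler-operator terms make the top-order analysis degenerate. In particular, excluding $u$-dependence of $K$ and the linear term $Bu$ requires the nonlinear coefficients and a case distinction on $e$, $f$, $g$.
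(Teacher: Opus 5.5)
\textbf{There is a genuine gap in Step 2, where you eliminate the dependence on $u_y$ and $u_t$.} You claim that the coefficients of $u_{x^5y}$, $u_{x^5t}$ and of $u_{yyy}$-type jets force $Q_{u_y}=Q_{u_t}=0$. These coefficients come only from linear pieces such as $D_x^6Q$, $\sigma D_y^2Q$ and $Q_{u_y}D_y\Delta$, so this cannot work. The linear part $L=D_tD_x+c\,D_x^4+D_x^6+\sigma D_y^2$ is formally self-adjoint. For the linear equation $Lu=0$, the functions $Q=u_y$ and $Q=u_t$ are genuine multipliers in your ansatz class: $E_u(u_y\,Lu)=L(u_y)-D_y(Lu)=0$, and similarly for $u_t$. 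For $c=0$ even the scaling multiplier $xu_x+3yu_y+5tu_t+\tfrac32u$ is admitted. Any coefficient that does not involve $e,f,g$ therefore vanishes on these functions and cannot exclude them.

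Concretely, the $u_{yyy}$ contributions cancel for the same sign reason as the order-seven terms in your Step 1. After the affine reduction $Q=Au_x+Bu_y+Cu_t+R$ (here $B,C$ are the coefficients of $u_y,u_t$; your coefficient of $u$ is $R_u$), the coefficients of $u_{x^5y}$ and $u_{x^5t}$ give only $B_x=0$ and $C_x=0$. The other linear-level coefficients give $C_y=0$, $2R_u=B_y$, $B_y=3A_x$, $C_t=5A_x$ and $B_t+2\sigma A_y=0$. All of these are compatible with $B,C\neq0$.

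\textbf{The missing idea is that the nonlinearity must be used at exactly this point, not in your Step 3.} What kills $B$ and $C$ are coefficients that carry the nonlinear parameters:
\begin{itemize}
\item the coefficients of $u_{xxx}u_{xy}$, $u_{xx}u_{xxy}$, $u_{xxx}u_{xt}$ and $u_{xx}u_{xxt}$, which give $(e+f)B=(e-f)B=0$ and $(e+f)C=(e-f)C=0$;
\item the $u\,u_x$-parts of the coefficients of $u_{xy}$ and $u_{xt}$, which give $gB=gC=0$.
\end{itemize}
The case split is then $(e,f)\neq(0,0)$, where $e+f$ and $e-f$ cannot both vanish, versus $e=f=0$ with $g\neq0$.

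Once $B=C=0$, the parameter-free relations above immediately give $A_x=0$, $R_u=0$ and $A_y=0$. So your Step 3 is misdirected. The coefficient of $u$ in $Q$ is removed by the $u_{tx}$ coefficient alone, with no appeal to $E_u(u\,D_x^2\mathcal{H})$ and no case distinction on $e,f,g$. The step you treat as routine is the one that genuinely requires $(e,f,g)\neq(0,0,0)$.
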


\begin{proof}
Splitting $E_u(Q\Delta)=0$ with respect to the highest jet monomials
shows first that $Q$ is affine in $u_x,u_y,u_t$, so that
\begin{equation}
Q = A\,u_x + B\,u_y + C\,u_t + R,
\label{eq:affine-form}
\end{equation}
with $A,B,C,R$ functions of $x,y,t,u$. Further coefficients give
$A_u=B_u=C_u=0$ and $B_x=C_x=C_y=0$, together with
\begin{equation}
2R_u = B_y,
\qquad
B_y = 3A_x,
\qquad
C_t = 5A_x .
\label{eq:parameter-free-relations}
\end{equation}
The coefficients of the remaining jet monomials yield
\begin{equation}
(e+f)B = (e-f)B = 0,
\qquad
(e+f)C = (e-f)C = 0,
\qquad
gB = gC = 0 .
\label{eq:BC-conditions}
\end{equation}
If $(e,f)\neq(0,0)$ then $e+f$ and $e-f$ cannot both vanish, so
$B=C=0$; if $e=f=0$ then $g\neq0$ and again $B=C=0$. In either case
\eqref{eq:parameter-free-relations} gives $A_x=0$ and $R_u=0$, and a
further coefficient gives $A_y=0$. Hence $A=A(t)$ and $R=K(x,y,t)$,
which is \eqref{eq:reduced-form}. The determining equation was set up, expanded, and split
symbolically using Maple.
\end{proof}

\subsection{The determining system}

Substituting \eqref{eq:reduced-form} into $E_u(Q\Delta)$ gives
\begin{equation}
\begin{aligned}
E_u(Q\Delta)
={}& \mathcal{L}[K]
+ e\,u\,K_{xxxx}
+ (3e-f)\,u_x K_{xxx}
+ \bigl(A' + (3e-f)K_{xx}\bigr) u_{xx}
+ g\,u^2 K_{xx}
\\
&+ (e-2f)\,A\,\bigl(u_x u_{xxxx} + 2u_{xx}u_{xxx}\bigr)
- 2g\,A\,\bigl(u_x^3 + 3u\,u_x u_{xx}\bigr),
\end{aligned}
\label{eq:reduced-euler}
\end{equation}
where
\begin{equation}
\mathcal{L}[K]
= K_{xt} + c\,K_{xxxx} + K_{xxxxxx} + \sigma K_{yy}.
\label{eq:L-operator}
\end{equation}
Splitting \eqref{eq:reduced-euler} with respect to the independent
jet monomials and the powers of $u$ gives the determining system
\begin{equation}
(e-2f)A = 0,
\qquad
g\,A = 0,
\qquad
A' + (3e-f)K_{xx} = 0,
\label{eq:det-A}
\end{equation}
\begin{equation}
e\,K_{xxxx} = 0,
\qquad
(3e-f)K_{xxx} = 0,
\qquad
g\,K_{xx} = 0,
\qquad
\mathcal{L}[K] = 0 .
\label{eq:det-K}
\end{equation}

By \eqref{eq:det-A}, a multiplier depends on $u_x$ only if $g=0$ and
$e=2f$. By \eqref{eq:det-K}, $K$ fails to be affine in $x$ only if
$g=0$ and $f=3e$. These two conditions are incompatible for a
nonlinear equation: $e=2f$ and $f=3e$ give $e=f=0$, which together
with $g=0$ is the linear case. The classification therefore splits
into three mutually exclusive cases.

\subsection{The generic case}

\begin{theorem}
\label{thm:generic}
Suppose that either $g\neq0$, or $g=0$ with $e\neq2f$ and $f\neq3e$.
Then every multiplier of order at most two is
\begin{equation}
Q = x\bigl(a_0(t) + a_1(t)\,y\bigr)
+ b_0(t) + b_1(t)\,y
- \frac{a_0'(t)\,y^2}{2\sigma}
- \frac{a_1'(t)\,y^3}{6\sigma},
\label{eq:generic-multiplier}
\end{equation}
where $a_0,a_1,b_0,b_1$ are arbitrary functions of $t$.
\end{theorem}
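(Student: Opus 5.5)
By Theorem~\ref{thm:order-reduction} and Lemma~\ref{lem:first-order-reduction}, every multiplier of order at most two has the form $Q=A(t)u_x+K(x,y,t)$. It therefore suffices to solve the determining system \eqref{eq:det-A}--\eqref{eq:det-K} under the hypotheses of the theorem. The plan is to show first that $A=0$, then that $K$ is affine in $x$, and finally to solve $\mathcal{L}[K]=0$ for such $K$.

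\emph{Step 1: $A=0$.} If $g\neq0$, then $gA=0$ gives $A=0$ directly. If $g=0$, the hypothesis $e\neq2f$ together with $(e-2f)A=0$ gives $A=0$.

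\emph{Step 2: $K_{xx}=0$.} If $g\neq0$, then $gK_{xx}=0$ gives this immediately. If $g=0$, we have $f\neq3e$, so $(3e-f)K_{xxx}=0$ yields $K_{xxx}=0$. Alternatively, and more economically, once $A=0$ the relation $A'+(3e-f)K_{xx}=0$ reduces to $(3e-f)K_{xx}=0$, and $3e\neq f$ gives $K_{xx}=0$. (In the case $g\neq0$ we do not need $3e\neq f$.) In every case
\[
K=x\,P(y,t)+M(y,t).
\]

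\emph{Step 3: solve $\mathcal{L}[K]=0$.} For $K$ affine in $x$, all $x$-derivatives of order at least two vanish, so $\mathcal{L}[K]=P_t+\sigma(xP_{yy}+M_{yy})$. Splitting in powers of $x$ gives
\[
P_{yy}=0,\qquad \sigma M_{yy}=-P_t .
\]
The first equation gives $P=a_0(t)+a_1(t)y$. Since $\sigma^{-1}=\sigma$, integrating the second twice in $y$ gives
\[
M=b_0(t)+b_1(t)y-\frac{a_0'y^2}{2\sigma}-\frac{a_1'y^3}{6\sigma}.
\]
This is exactly \eqref{eq:generic-multiplier}. Conversely, each such $Q$ satisfies all equations in \eqref{eq:det-A}--\eqref{eq:det-K}, so every function of this form is a multiplier.

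There is no genuine obstacle here: the substantive work lies in the earlier reduction (the Maple splitting that produced \eqref{eq:reduced-euler}), which is taken as given. The one point needing care is checking the mixed case, $g\neq0$ with possibly $f=3e$ or $e=2f$, and the argument above shows that $g\neq0$ alone already forces $A=0$ and $K_{xx}=0$.
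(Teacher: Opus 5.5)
Your proposal is correct and follows the paper's proof essentially step for step. First $A=0$, from $gA=0$ or from $(e-2f)A=0$. Then $K_{xx}=0$, from $gK_{xx}=0$ or from $A'+(3e-f)K_{xx}=0$ once $A=0$. Finally, $K_{xt}+\sigma K_{yy}=0$ is split in powers of $x$ and integrated in $y$.

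The only blemish is in Step~2. Your first suggestion there, $(3e-f)K_{xxx}=0$, only makes $K$ quadratic in $x$, so the step actually rests on your ``alternative'' via $A'+(3e-f)K_{xx}=0$, which is the paper's route. Your explicit converse check is a harmless addition that the paper leaves implicit.
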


\begin{proof}
If $g\neq0$, then $gA=0$ and $gK_{xx}=0$ give $A=0$ and $K_{xx}=0$.
If $g=0$ and $e\neq2f$, then $(e-2f)A=0$ gives $A=0$, whence
$(3e-f)K_{xx}=0$ and $f\neq3e$ give $K_{xx}=0$. In both cases
$K_{xxxx}=K_{xxxxxx}=0$, so $\mathcal{L}[K]=0$ reduces to
\begin{equation}
K_{xx}=0,
\qquad
K_{xt} + \sigma K_{yy} = 0 .
\label{eq:generic-system}
\end{equation}
Writing $K = x\,a(y,t) + b(y,t)$ and splitting the second equation in
powers of $x$ gives $a_{yy}=0$ and $a_t + \sigma b_{yy} = 0$.
Integrating in $y$ yields \eqref{eq:generic-multiplier}.
\end{proof}
\subsection{The case \texorpdfstring{$g=0$, $e=2f$}{g=0, e=2f}}
\begin{remark}
\label{rem:e2f-reduction}
This case is distinguished by the one-dimensional reduction. For the
$y$-independent reduction of \eqref{eq:normalized-equation} with
$g=0$, integration by parts gives
\begin{equation}
\frac{d}{dt}\int \tfrac12 u_x^2 \,dx
= \left(f - \tfrac12 e\right)\int u_x u_{xx}^2 \,dx
\label{eq:1d-gradient-energy}
\end{equation}
for solutions with suitable decay or periodicity. The density
$\tfrac12 u_x^2$ is thus conserved in one dimension precisely when
$e=2f$, which is the condition under which $u_x$ occurs as a
multiplier in two dimensions.
\end{remark}
\begin{theorem}
\label{thm:e-2f}
Suppose $g=0$, $e=2f$ and $f\neq0$. Then every multiplier of order at
most two is
\begin{equation}
Q = A(t)\,u_x + K(x,y,t),
\label{eq:e2f-multiplier}
\end{equation}
where
\begin{equation}
\begin{aligned}
K = {}&
-\frac{A'(t)}{10f}\,x^2
+ \left(
\frac{A''(t)\,y^2}{10\sigma f}
+ p_1(t)\,y + p_0(t)
\right) x
\\
&- \frac{A'''(t)\,y^4}{120\,\sigma^2 f}
- \frac{p_1'(t)\,y^3}{6\sigma}
- \frac{p_0'(t)\,y^2}{2\sigma}
+ r_1(t)\,y + r_0(t),
\end{aligned}
\label{eq:e2f-K}
\end{equation}
and $A,p_0,p_1,r_0,r_1$ are arbitrary functions of $t$. In
particular, $Q=u_x$ is a multiplier.
\end{theorem}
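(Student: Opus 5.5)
The plan is to specialize the determining system \eqref{eq:det-A}--\eqref{eq:det-K} to the present parameter values and integrate it directly. Theorem~\ref{thm:order-reduction} and Lemma~\ref{lem:first-order-reduction} already give $Q = A(t)\,u_x + K(x,y,t)$ for every multiplier of order at most two. It therefore remains to determine which pairs $(A,K)$ satisfy the determining system.

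\emph{Specializing the system.} With $g=0$ and $e=2f$, the conditions $(e-2f)A=0$ and $gA=0$ hold identically, so $A(t)$ is not forced to vanish. Moreover $3e-f=5f\neq0$ because $f\neq0$. Hence the system becomes
\begin{equation*}
A' + 5f\,K_{xx} = 0,
\qquad
2f\,K_{xxxx} = 0,
\qquad
5f\,K_{xxx} = 0,
\qquad
\mathcal{L}[K] = 0 .
\end{equation*}
The condition $gK_{xx}=0$ is vacuous here.

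\emph{Solving for $K$.} The first equation gives $K_{xx} = -A'(t)/(5f)$, which is independent of $x$. The equations $K_{xxx}=K_{xxxx}=0$ then hold automatically. Integrating twice in $x$ gives
\begin{equation*}
K = -\frac{A'(t)}{10f}\,x^2 + x\,a(y,t) + b(y,t).
\end{equation*}
Since $K_{xxxx}=K_{xxxxxx}=0$, the equation $\mathcal{L}[K]=0$ reduces to $K_{xt}+\sigma K_{yy}=0$. Substituting the form of $K$ and using $K_{yy} = x\,a_{yy} + b_{yy}$ gives
\begin{equation*}
-\frac{A''}{5f}\,x + a_t + \sigma\bigl(x\,a_{yy} + b_{yy}\bigr) = 0 .
\end{equation*}
Splitting in powers of $x$ yields two equations. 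The first is $\sigma a_{yy} = A''/(5f)$, whose solution is $a = A''y^2/(10\sigma f) + p_1 y + p_0$. The second is $b_{yy} = -a_t/\sigma$. Integrating it twice in $y$, with $\sigma^2=1$ kept explicit as in \eqref{eq:e2f-K}, produces the terms $-A'''y^4/(120\sigma^2 f) - p_1'y^3/(6\sigma) - p_0'y^2/(2\sigma) + r_1 y + r_0$. This is exactly \eqref{eq:e2f-K}.

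\emph{Converse and the example $Q=u_x$.} Conversely, any $(A,K)$ of this form satisfies every equation of the specialized system. Each term of \eqref{eq:reduced-euler} then vanishes, so $E_u(Q\Delta)=0$ and $Q$ is a multiplier. Taking $A=1$ and $p_0=p_1=r_0=r_1=0$ gives $K=0$, because all the $x$-dependent terms involve derivatives of $A$. Thus $Q=u_x$ is a multiplier.

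\emph{Main difficulty.} There is no real obstacle once \eqref{eq:reduced-euler} is taken as given. The only points needing care are bookkeeping ones: noting that $3e-f=5f$ is nonzero precisely because $f\neq0$, and correctly tracking the signs and factors of $\sigma$ in the two $y$-integrations.
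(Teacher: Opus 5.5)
Your proposal is correct and follows the paper's proof. You specialize the determining system using $3e-f=5f\neq0$, fix the $x^2$-coefficient of $K$ from $A'+5fK_{xx}=0$, split $K_{xt}+\sigma K_{yy}=0$ in powers of $x$, and integrate twice in $y$. The only differences are cosmetic: you get $K_{xxx}=0$ as a consequence of $K_{xx}=-A'/(5f)$ rather than from $(3e-f)K_{xxx}=0$, and you spell out the sufficiency check that the paper leaves implicit.
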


\begin{proof}
Here $3e-f=5f\neq0$, so \eqref{eq:det-K} gives $K_{xxx}=0$, hence
$K_{xxxx}=K_{xxxxxx}=0$, and the determining system reduces to
\begin{equation}
K_{xxx}=0,
\qquad
A' + 5f\,K_{xx} = 0,
\qquad
K_{xt} + \sigma K_{yy} = 0 .
\label{eq:e2f-system}
\end{equation}
Writing $K = k_2(y,t)x^2 + k_1(y,t)x + k_0(y,t)$, the second equation
gives $k_2 = -A'/(10f)$. Splitting the third equation in powers of
$x$ then gives $k_{2,yy}=0$, $2k_{2,t} + \sigma k_{1,yy} = 0$ and
$k_{1,t} + \sigma k_{0,yy} = 0$; successive integration in $y$ yields
\eqref{eq:e2f-K}. Taking $A=1$ and
$p_0=p_1=r_0=r_1=0$ gives $Q=u_x$.
\end{proof}
\subsection{The case \texorpdfstring{$g=0$, $f=3e$}{g=0, f=3e}}
\begin{remark}
\label{rem:f3e-divergence}
This case is distinguished by the structure of the nonlinear terms.
When $g=0$ and $f=3e$,
\begin{equation}
\mathcal{H}[u] = e\bigl(u\,u_{xx} + u_x^2\bigr)
= \tfrac12 e\,D_x^2(u^2),
\label{eq:H-collapse}
\end{equation}
so that by \eqref{eq:Delta-divergence} the nonlinear part of $\Delta$
is the fourth $x$-derivative $\tfrac12 e\,D_x^4(u^2)$. The multiplier
pairs with this term through $K_{xxxx}$ rather than through
$K_{xx}$, which is why the multiplier may be cubic in $x$ on this
case and is affine in $x$ otherwise.
\end{remark}
\begin{theorem}
\label{thm:f-3e}
Suppose $g=0$, $f=3e$ and $e\neq0$. Then every multiplier of order at
most two is independent of $u$ and its derivatives, and is given by
\begin{equation}
\begin{aligned}
Q = {}&
x^3\bigl(a_0 + a_1 y\bigr)
+ x^2\left(
b_0 + b_1 y
- \frac{3a_0'\,y^2}{2\sigma}
- \frac{a_1'\,y^3}{2\sigma}
\right)
\\
&+ x\left(
d_0 + d_1 y
- \frac{b_0'\,y^2}{\sigma}
- \frac{b_1'\,y^3}{3\sigma}
+ \frac{a_0''\,y^4}{4\sigma^2}
+ \frac{a_1''\,y^5}{20\sigma^2}
\right)
\\
&+ r_0 + r_1 y
- \frac{d_0'\,y^2}{2\sigma}
- \frac{d_1'\,y^3}{6\sigma}
+ \frac{b_0''\,y^4}{12\sigma^2}
+ \frac{b_1''\,y^5}{60\sigma^2}
- \frac{a_0'''\,y^6}{120\sigma^3}
- \frac{a_1'''\,y^7}{840\sigma^3},
\end{aligned}
\label{eq:f3e-multiplier}
\end{equation}
where $a_0,a_1,b_0,b_1,d_0,d_1,r_0,r_1$ are arbitrary functions
of~$t$.
\end{theorem}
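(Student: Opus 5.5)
We reduce to the determining system \eqref{eq:det-A}--\eqref{eq:det-K} and solve it, in the same way as in the proofs of Theorems~\ref{thm:generic} and~\ref{thm:e-2f}. By Theorem~\ref{thm:order-reduction} and Lemma~\ref{lem:first-order-reduction}, every multiplier of order at most two has the form $Q=A(t)u_x+K(x,y,t)$. With $g=0$ and $f=3e$ we have $e-2f=-5e\neq0$, so $(e-2f)A=0$ forces $A=0$. Hence $Q=K(x,y,t)$ is independent of $u$ and its derivatives, which is the first assertion of the theorem.

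With $A=0$, the third equation in \eqref{eq:det-A} and the equations $(3e-f)K_{xxx}=0$ and $gK_{xx}=0$ in \eqref{eq:det-K} hold identically, because $3e-f=0$ and $g=0$. The remaining conditions are $eK_{xxxx}=0$, which gives $K_{xxxx}=0$ since $e\neq0$, and $\mathcal{L}[K]=0$. Since $K_{xxxx}=0$ kills both $cK_{xxxx}$ and $K_{xxxxxx}$, this last equation reduces to
\[
K_{xt}+\sigma K_{yy}=0 .
\]
We therefore write
\[
K=k_3(y,t)x^3+k_2(y,t)x^2+k_1(y,t)x+k_0(y,t)
\]
and split the equation in powers of $x$. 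This gives the triangular chain
\[
k_{3,yy}=0,\qquad 3k_{3,t}+\sigma k_{2,yy}=0,\qquad 2k_{2,t}+\sigma k_{1,yy}=0,\qquad k_{1,t}+\sigma k_{0,yy}=0 .
\]

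We then integrate successively in $y$, using $\sigma^{-1}=\sigma$ where convenient and introducing two functions of $t$ at each stage.
\begin{itemize}
\item First, $k_3=a_0+a_1y$.
\item Next, $k_{2,yy}=-3\sigma^{-1}(a_0'+a_1'y)$, so $k_2=b_0+b_1y-\tfrac{3a_0'y^2}{2\sigma}-\tfrac{a_1'y^3}{2\sigma}$.
\item Then $k_{1,yy}=-2\sigma^{-1}k_{2,t}$. Integrating twice gives $d_0+d_1y-\tfrac{b_0'y^2}{\sigma}-\tfrac{b_1'y^3}{3\sigma}+\tfrac{a_0''y^4}{4\sigma^2}+\tfrac{a_1''y^5}{20\sigma^2}$.
\item Finally, $k_{0,yy}=-\sigma^{-1}k_{1,t}$. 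This yields $r_0+r_1y$ together with the listed terms up to $y^7$; for example, the $y^7$ coefficient is $-a_1'''/(20\cdot 42\,\sigma^3)=-a_1'''/(840\sigma^3)$.
\end{itemize}
Assembling these pieces gives \eqref{eq:f3e-multiplier}. Conversely, each such $K$ satisfies $K_{xxxx}=0$ and $\mathcal{L}[K]=0$, so it makes every term of \eqref{eq:reduced-euler} vanish and is therefore a multiplier.

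There is no conceptual obstacle here. The main risk lies in the bookkeeping of the numerical coefficients and the powers of $\sigma$ in the last two integrations, and these should be checked term by term against \eqref{eq:f3e-multiplier}. We must also confirm that the $u$-dependent terms $eu K_{xxxx}$ and $(3e-f)u_xK_{xxx}$ in \eqref{eq:reduced-euler} vanish for exactly the reasons given above: the first because $K_{xxxx}=0$, the second because $3e-f=0$, so that the cubic term in $x$ survives.
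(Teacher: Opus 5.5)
Your proposal is correct and matches the paper's proof step for step: $A=0$ from $(e-2f)A=0$ with $e-2f=-5e\neq0$, then $K_{xxxx}=0$ and $K_{xt}+\sigma K_{yy}=0$, the same triangular chain in powers of $x$, and successive integration in $y$, with your coefficients (for example $-a_1'''/(840\sigma^3)$) agreeing with the stated formula. Your explicit converse check, that every such $K$ makes each term of \eqref{eq:reduced-euler} vanish, is a small addition that the paper leaves implicit.
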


\begin{proof}
Here $e-2f=-5e\neq0$, so $A=0$ by \eqref{eq:det-A}. Since $3e-f=0$,
no condition on $K_{xx}$ or $K_{xxx}$ remains, while $e\neq0$ and
\eqref{eq:det-K} give $K_{xxxx}=0$. Then $K_{xxxxxx}=0$ and
$\mathcal{L}[K]=0$ reduces to
\begin{equation}
K_{xxxx}=0,
\qquad
K_{xt} + \sigma K_{yy} = 0 .
\label{eq:f3e-system}
\end{equation}
Writing $K = q_3x^3 + q_2x^2 + q_1x + q_0$ and splitting the second
equation in powers of $x$ gives $q_{3,yy}=0$,
$3q_{3,t} + \sigma q_{2,yy} = 0$, $2q_{2,t} + \sigma q_{1,yy} = 0$ and
$q_{1,t} + \sigma q_{0,yy} = 0$; successive integration in $y$ yields
\eqref{eq:f3e-multiplier}.
\end{proof}

\subsection{Summary of the classification}
\label{subsec:classification-summary}
The preceding results can be collected into the following complete
classification.

\begin{theorem}[Complete low-order classification]
\label{thm:complete-classification}

For every nonlinear member of the generalized fifth-order KP family
\eqref{eq:normalized-equation}, all local conservation laws whose
multipliers are local in $u$ and have differential order at most two
are classified, up to local equivalence, by the following three
parameter cases.

\begin{enumerate}[label=\textnormal{(\roman*)}]

\item For all nonlinear parameter values outside the two exceptional
cases below, every multiplier has the form
\begin{equation*}
Q=K(x,y,t),
\end{equation*}
where
\begin{equation*}
K_{xx}=0,
\qquad
K_{xt}+\sigma K_{yy}=0.
\end{equation*}
This family depends on four arbitrary functions of $t$.

\item When
\begin{equation*}
g=0,
\qquad
e=2f,
\qquad
f\neq0,
\end{equation*}
every multiplier has the form
\begin{equation*}
Q=A(t)u_x+K(x,y,t),
\end{equation*}
where
\begin{equation*}
K_{xxx}=0,
\qquad
A'(t)+5fK_{xx}=0,
\qquad
K_{xt}+\sigma K_{yy}=0.
\end{equation*}
This family depends on five arbitrary functions of $t$.

\item When
\begin{equation*}
g=0,
\qquad
f=3e,
\qquad
e\neq0,
\end{equation*}
every multiplier has the form
\begin{equation*}
Q=K(x,y,t),
\end{equation*}
where
\begin{equation*}
K_{xxxx}=0,
\qquad
K_{xt}+\sigma K_{yy}=0.
\end{equation*}
This family depends on eight arbitrary functions of $t$.

\end{enumerate}

Every nonzero multiplier in these families determines a nontrivial
local conservation-law class.

\end{theorem}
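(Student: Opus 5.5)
The plan is to assemble the classification from results already proved, in four steps.

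First, Theorem~\ref{thm:order-reduction} shows that every multiplier of order at most two has the first-order form \eqref{eq:first-order-ansatz}. Lemma~\ref{lem:first-order-reduction} then reduces any such multiplier of a nonlinear member to $Q=A(t)u_x+K(x,y,t)$. Substituting this form into $E_u(Q\Delta)$ gives \eqref{eq:reduced-euler}, so the multipliers are exactly the pairs $(A,K)$ solving \eqref{eq:det-A}--\eqref{eq:det-K}.

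Second, I will partition the nonlinear parameter space. As observed after \eqref{eq:det-K}, the conditions $\{g=0,e=2f\}$ and $\{g=0,f=3e\}$ intersect only in the linear member. Hence every nonlinear parameter triple lies in exactly one of three sets: the generic set of Theorem~\ref{thm:generic}; $\{g=0,e=2f,f\neq0\}$, where $f\neq0$ is automatic off the linear case; or $\{g=0,f=3e,e\neq0\}$. In each set I reduce the determining system as in the proofs of Theorems~\ref{thm:generic}, \ref{thm:e-2f} and \ref{thm:f-3e}:
\begin{itemize}
\item generic case: $A=0$ and $K_{xx}=0$, so $\mathcal{L}[K]=0$ becomes $K_{xt}+\sigma K_{yy}=0$;
\item case $e=2f$: $K_{xxx}=0$ together with $A'+5fK_{xx}=0$;
\item case $f=3e$: $A=0$ and $K_{xxxx}=0$.
\end{itemize}
In every case the vanishing $x$-derivatives kill $cK_{xxxx}$ and $K_{xxxxxx}$, so $\mathcal{L}[K]=0$ reduces to $K_{xt}+\sigma K_{yy}=0$. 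This yields the three characterizing systems stated in (i)--(iii).

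Third, I count the arbitrary functions from the explicit solutions. Formula \eqref{eq:generic-multiplier} has $a_0,a_1,b_0,b_1$, giving four. Formula \eqref{eq:e2f-K} has $A,p_0,p_1,r_0,r_1$, giving five. Formula \eqref{eq:f3e-multiplier} has $a_0,\dots,r_1$, giving eight. To see that these counts are genuine, note that the solution of each system is obtained by successively integrating $y$-polynomial coefficients, and each integration introduces exactly two free functions of $t$. Distinct choices of these functions give distinct $Q$, since they appear as independent monomial coefficients in $x$ and $y$.

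Finally, the claim that every nonzero multiplier gives a nontrivial conservation-law class follows from the multiplier--conservation-law correspondence of Section~\ref{sec:multiplier-framework}: a multiplier gives a nontrivial class whenever it is nonzero on the solution space $\mathcal{E}$. The delicate point is to check that a nonzero $Q$ of the form $A(t)u_x+K(x,y,t)$ does not vanish on $\mathcal{E}$. This holds because $u_x$ and $x,y,t$ are parametric (non-principal) jet coordinates, and the equation constrains only $u_{tx}$ and its differential consequences. So $Q|_{\mathcal{E}}=Q$ as a function on the parametric jet space, and it is nonzero there.

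I expect the main obstacle to be justifying completeness, not existence. Every multiplier in each family is automatically a solution, but completeness requires two checks: the splitting in Lemma~\ref{lem:first-order-reduction} must be exhaustive, and equivalence of conservation laws must correspond exactly to equality of multipliers on $\mathcal{E}$. For the first, I rely on the earlier results as stated. For the second, I invoke the Anco--Bluman correspondence for regular normal equations.
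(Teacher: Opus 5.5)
Your proposal is correct and follows the paper's proof essentially step for step. The steps are: reduction to first order via Theorem~\ref{thm:order-reduction}, the form $Q=A(t)u_x+K(x,y,t)$ from Lemma~\ref{lem:first-order-reduction}, the three-way case split of \eqref{eq:det-A}--\eqref{eq:det-K}, counting arbitrary functions from the explicit solutions, and the multiplier--conservation-law correspondence for nontriviality. Your extra remarks make explicit what the paper leaves implicit: $Q$ involves only parametric jet coordinates and so is nonzero on $\mathcal{E}$, and the free functions enter as coefficients of independent monomials in $x,y$.
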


\begin{proof}

Theorem~\ref{thm:order-reduction} shows that every multiplier
of differential order at most two reduces to first order.
Lemma~\ref{lem:first-order-reduction} then gives
\begin{equation*}
Q=A(t)u_x+K(x,y,t).
\end{equation*}
The determining equations \eqref{eq:det-A}--\eqref{eq:det-K} split
the nonlinear coefficient space into the three cases stated above.

In the generic case, integrating $K_{xx}=0$ and
$K_{xt}+\sigma K_{yy}=0$ yields four arbitrary functions of $t$.
In the first exceptional case, $K_{xxx}=0$ together with
$A'+5fK_{xx}=0$ and $K_{xt}+\sigma K_{yy}=0$ yields five arbitrary
functions of $t$. In the second exceptional case, $K_{xxxx}=0$ and
$K_{xt}+\sigma K_{yy}=0$ yield eight arbitrary functions of $t$.
The multiplier--conservation-law correspondence stated in
Section~\ref{sec:multiplier-framework} gives the final assertion.

\end{proof}
\begin{remark}
The branching is governed entirely by the determining system
\eqref{eq:det-A}--\eqref{eq:det-K}. The two exceptional cases are
mutually exclusive for a nonlinear equation. Indeed, the simultaneous
relations
\[
e=2f,
\qquad
f=3e
\]
imply $e=f=0$, and together with the exceptional-branch condition
$g=0$ this gives the excluded linear equation.

No additional cases arise when $e=0$, $f\neq0$, $g=0$, or when
$f=0$, $e\neq0$, $g=0$; both belong to the generic case. Likewise,
the relations $e+f=0$ and $e-f=0$, which occur in the first-order
coefficient splitting, do not enlarge the multiplier space.
\end{remark}
\section{Conserved currents}
\label{sec:conserved-currents}

For each multiplier family obtained in
Section~\ref{sec:classification}, we now give a representative
conserved current. Each current is presented in characteristic form,
\[
D_tT+D_xX+D_yY=Q\Delta,
\]
holding identically for arbitrary $u$.

For convenience, we first treat the generic case and the exceptional
case whose multipliers remain zeroth order, and then treat the
exceptional case containing the derivative-dependent term
$A(t)u_x$.

For differential functions $P$ and $R$
and an even positive integer $n$, put
\begin{equation}
\mathcal{B}_n(P,R)
= \sum_{j=0}^{n-1} (-1)^j\,D_x^{\,j}(P)\,D_x^{\,n-1-j}(R),
\label{eq:B-def}
\end{equation}
so that
\begin{equation}
D_x\,\mathcal{B}_n(P,R)
=
P\,D_x^{\,n}R-D_x^{\,n}(P)\,R.
\label{eq:B-identity}
\end{equation}

\subsection{An identity for the zeroth-order multipliers}

\begin{proposition}
\label{prop:universal-current}
For every function $K(x,y,t)$,
\begin{equation}
\begin{aligned}
K\,\Delta
= {}& D_t\bigl(K\,u_x\bigr)
+ D_x\Bigl(
-K_t\,u
+ c\,\mathcal{B}_4(K,u)
+ \mathcal{B}_6(K,u)
+ \mathcal{B}_2\bigl(K,\mathcal{H}[u]\bigr)
\Bigr)
\\
&+ D_y\Bigl(\sigma\bigl(K\,u_y - K_y\,u\bigr)\Bigr)
+ u\,\mathcal{L}[K]
+ K_{xx}\,\mathcal{H}[u],
\end{aligned}
\label{eq:universal-current}
\end{equation}
as an identity in the jet variables.
\end{proposition}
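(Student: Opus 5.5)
The plan is to verify the identity term by term, using the decomposition \eqref{eq:Delta-divergence} of $\Delta$ into four pieces: $u_{tx}$, the linear dispersive terms $c\,u_{xxxx}+u_{xxxxxx}$, the nonlinear term $D_x^2\mathcal{H}[u]$, and $\sigma u_{yy}$. Since $K$ depends only on $(x,y,t)$, each product $K\cdot(\text{piece})$ can be rewritten as a total divergence plus a remainder of the form $(\text{derivative of }K)\cdot(\text{differential function})$. The remainders should then sum to $u\,\mathcal{L}[K]+K_{xx}\mathcal{H}[u]$. All manipulations are the product rule for total derivatives, so the result holds identically on the jet space.

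\emph{Time piece.} Write $K u_{tx}=D_t(K u_x)-K_t u_x$ and $-K_t u_x=D_x(-K_t u)+K_{xt}u$. This produces the density $K u_x$, the flux contribution $-K_t u$, and the remainder $u K_{xt}$.

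\emph{Dispersive pieces.} Apply \eqref{eq:B-identity} with $P=K$, $R=u$ and $n=4,6$. This gives $K\,D_x^nu=D_x\mathcal{B}_n(K,u)+u\,D_x^nK$, so the remainders are $c\,u K_{xxxx}$ and $u K_{xxxxxx}$. The identity \eqref{eq:B-identity} is itself a telescoping sum: $D_x$ of the $j$-th term equals $(-1)^j\bigl(D_x^jP\,D_x^{n-j}R+D_x^{j+1}P\,D_x^{n-1-j}R\bigr)$, and since $n$ is even the endpoint terms appear with the stated signs. I would note this briefly rather than take it on faith.

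\emph{Transverse and nonlinear pieces.} For the transverse term, $\sigma K u_{yy}=D_y\bigl(\sigma(Ku_y-K_yu)\bigr)+\sigma K_{yy}u$. For the nonlinear term, \eqref{eq:B-identity} with $n=2$ and $R=\mathcal{H}[u]$ gives $K\,D_x^2\mathcal{H}=D_x\mathcal{B}_2(K,\mathcal{H})+K_{xx}\mathcal{H}$.

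Summing the four pieces, the remainders are $u(K_{xt}+cK_{xxxx}+K_{xxxxxx}+\sigma K_{yy})+K_{xx}\mathcal{H}[u]$, which equals $u\,\mathcal{L}[K]+K_{xx}\mathcal{H}[u]$ by \eqref{eq:L-operator}. This is exactly \eqref{eq:universal-current}.

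The only step needing care is the sign and evenness bookkeeping in \eqref{eq:B-identity}, which is elementary. No case analysis is needed because $K$ is arbitrary and independent of $u$.
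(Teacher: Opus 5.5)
Your proposal is correct and takes the same route as the paper. You split $\Delta$ via \eqref{eq:Delta-divergence}, integrate the mixed term $K u_{tx}$ and the transverse term $\sigma K u_{yy}$ by parts directly, and apply \eqref{eq:B-identity} with $n=4,6$ to the dispersive terms and with $n=2$ to $K\,D_x^2\mathcal{H}[u]$; the remainders then collect into $u\,\mathcal{L}[K]+K_{xx}\mathcal{H}[u]$. Your telescoping check of \eqref{eq:B-identity} for even $n$, which the paper only asserts, is correct and adds nothing beyond a routine confirmation.
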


\begin{proof}
The mixed term gives
$K\,u_{tx} = D_t(K u_x) - D_x(K_t u) + K_{xt}\,u$, and the
transverse term gives
$\sigma K u_{yy} = D_y\bigl(\sigma(K u_y - K_y u)\bigr)
+ \sigma K_{yy}\,u$. Applying \eqref{eq:B-identity} with $n=4$ and
$n=6$ to the terms $c\,K u_{xxxx}$ and $K u_{xxxxxx}$, and with $n=2$
to $K\,D_x^2\mathcal{H}[u]$, and adding the results using the form
\eqref{eq:Delta-divergence} of $\Delta$, gives
\eqref{eq:universal-current}.
\end{proof}

The last two terms in \eqref{eq:universal-current} are handled
differently in the three nonlinear cases. In the generic case,
$K_{xx}=0$ and $\mathcal{L}[K]=0$, so both terms vanish. In the case
$g=0$, $f=3e$, the relation
\[
\mathcal{H}[u]=\frac{e}{2}D_x^2(u^2)
\]
and the condition $K_{xxxx}=0$ imply that
$K_{xx}\mathcal{H}[u]$ is a total $x$-derivative. In the case
$g=0$, $e=2f$, the term $K_{xx}\mathcal{H}[u]$ yields a total
$x$-derivative together with a residual term that cancels the
corresponding residual arising from the $A(t)u_x$ part of the
multiplier. The resulting conserved currents are obtained below.

\subsection{The generic case}

Let $K$ be the multiplier \eqref{eq:generic-multiplier}. Then
$K_{xx}=0$, so all higher $x$-derivatives of $K$ vanish and
$\mathcal{L}[K] = K_{xt}+\sigma K_{yy} = 0$. Hence
\eqref{eq:universal-current} gives the conserved current
\begin{equation}
\begin{aligned}
T &= K\,u_x,
\\
X &=
-K_t\,u
+c\,\mathcal{B}_4(K,u)
+\mathcal{B}_6(K,u)
+\mathcal{B}_2\bigl(K,\mathcal{H}[u]\bigr),
\\
Y &= \sigma\bigl(K\,u_y-K_y\,u\bigr),
\end{aligned}
\label{eq:generic-current}
\end{equation}
where
\[
\mathcal{B}_2\bigl(K,\mathcal{H}[u]\bigr)
=
K\,D_x\mathcal{H}[u]-K_x\,\mathcal{H}[u].
\]

Since $K_{xx}=0$ and
$K_{xt}+\sigma K_{yy}=0$, the current satisfies the characteristic
identity
\[
D_tT+D_xX+D_yY=K\Delta
\]
identically for arbitrary $u$.

Modulo a total $x$-derivative the density is equivalent to
\begin{equation}
T \sim -K_x\,u ,
\label{eq:generic-density-reduced}
\end{equation}
since $K u_x = D_x(K u) - K_x u$. For the multiplier
\eqref{eq:generic-multiplier} this gives
$T \sim -\bigl(a_0(t)+a_1(t)y\bigr)u$.

\begin{remark}
\label{rem:zero-density}
The multipliers with $a_0=a_1=0$ have $K_x=0$, so their densities are
equivalent to zero and the resulting conservation laws take the
purely spatial form $D_x X + D_y Y = 0$ on solutions. By the
correspondence recalled in Section~\ref{sec:multiplier-framework},
these conservation laws are nevertheless nontrivial, since their
multipliers do not vanish on solutions.
\end{remark}

\subsection{The case \texorpdfstring{$g=0$, $f=3e$}{g=0, f=3e}}

Let $K$ be the multiplier \eqref{eq:f3e-multiplier}. Here
$K_{xxxx}=0$ and $\mathcal{L}[K]=K_{xt}+\sigma K_{yy}=0$, while by
Remark~\ref{rem:f3e-divergence} $\mathcal{H}[u]=\tfrac12 e\,D_x^2(u^2)$.
Two applications of \eqref{eq:B-identity} give
\begin{equation}
K_{xx}\,\mathcal{H}[u]
= \tfrac12 e\,K_{xx}\,D_x^2(u^2)
= \tfrac12 e\,D_x\Bigl(
K_{xx}\,D_x(u^2) - K_{xxx}\,u^2
\Bigr)
+ \tfrac12 e\,K_{xxxx}\,u^2 ,
\label{eq:f3e-residual}
\end{equation}
and the last term vanishes. Hence \eqref{eq:universal-current} gives
the conserved current
\begin{equation}
T = K\,u_x,
\qquad
X = -K_t\,u
+ c\,\mathcal{B}_4(K,u)
+ \mathcal{B}_6(K,u)
+ \tfrac12 e\,\mathcal{B}_4\bigl(K,u^2\bigr),
\qquad
Y = \sigma\bigl(K\,u_y - K_y\,u\bigr).
\label{eq:f3e-current}
\end{equation}
The relations $K_{xxxx}=0$ and
$K_{xt}+\sigma K_{yy}=0$, together with
\eqref{eq:f3e-residual}, therefore give the characteristic identity
\[
D_tT+D_xX+D_yY=K\Delta,
\]
holding identically for arbitrary $u$.
Since $K$ is cubic in $x$, the reduced density $-K_x u$ is quadratic
in $x$, so this case yields conservation laws with densities
involving the second $x$-moment of $u$.

\subsection{The case \texorpdfstring{$g=0$, $e=2f$}{g=0, e=2f}}

Let $Q = A(t)u_x + K$ be the multiplier of
Theorem~\ref{thm:e-2f}. Set
\begin{equation*}
\begin{aligned}
T_K &= K\,u_x,
\\
X_K &=
-K_t\,u
+c\,\mathcal{B}_4(K,u)
+\mathcal{B}_6(K,u)
+\mathcal{B}_2\bigl(K,\mathcal{H}[u]\bigr),
\\
Y_K &= \sigma\bigl(K\,u_y-K_y\,u\bigr).
\end{aligned}
\end{equation*}
On this branch,
\begin{equation*}
\mathcal{H}[u]
=
f\left(2u\,u_{xx}-\tfrac12u_x^2\right).
\end{equation*}
Moreover, $K_{xxx}=0$, $K_{xt}+\sigma K_{yy}=0$, and
$A'+5fK_{xx}=0$. Hence
\begin{align*}
K_{xx}\mathcal{H}[u]
&=
-\frac{A'}{5f}
f\left(2u\,u_{xx}-\tfrac12u_x^2\right)
\\
&=
-\frac25A'\,u\,u_{xx}
+\frac1{10}A'\,u_x^2
\\
&=
D_x\left(-\frac25A'\,u\,u_x\right)
+\frac12A'\,u_x^2.
\end{align*}
Consequently, the universal identity
\eqref{eq:universal-current} gives
\begin{equation}
K\,\Delta
=
D_tT_K
+
D_x\left(
X_K-\frac25A'(t)\,u\,u_x
\right)
+
D_yY_K
+
\frac12A'(t)\,u_x^2.
\label{eq:e2f-K-part}
\end{equation} The part
$A(t)u_x$ satisfies
\begin{equation}
A(t)\,u_x\,\Delta
=
D_t\Bigl(\tfrac12A\,u_x^2\Bigr)
+
D_x\bigl(A\,\mathcal{P}[u]\bigr)
+
D_y\bigl(\sigma A\,u_xu_y\bigr)
-
\tfrac12A'(t)\,u_x^2,
\label{eq:e2f-A-part}
\end{equation}
where
\begin{equation}
\begin{aligned}
\mathcal{P}[u]
= {}& c\Bigl(u_x u_{xxx} - \tfrac12 u_{xx}^2\Bigr)
+ u_x u_{xxxxx} - u_{xx}u_{xxxx} + \tfrac12 u_{xxx}^2
\\
&+ e\,u\,u_x u_{xxx} + f\,u_x^2 u_{xx}
- \tfrac12 e\,u\,u_{xx}^2
- \tfrac12 \sigma\,u_y^2 .
\end{aligned}
\label{eq:P-def}
\end{equation}
The residual term
$\tfrac12A'(t)u_x^2$ in \eqref{eq:e2f-K-part}
cancels the residual term
$-\tfrac12A'(t)u_x^2$ in \eqref{eq:e2f-A-part}.
Therefore, adding the two identities gives the conserved current
\begin{equation}
\begin{aligned}
T
&=
\tfrac12A(t)\,u_x^2+K\,u_x,
\\
X
&=
A\,\mathcal{P}[u]
+X_K
-\frac25A'(t)\,u\,u_x,
\\
Y
&=
\sigma\bigl(
A\,u_xu_y
+K\,u_y
-K_y\,u
\bigr).
\end{aligned}
\label{eq:e2f-current}
\end{equation}
In particular $A=1$, $K=0$ gives the conserved density
$\tfrac12 u_x^2$, with flux $\mathcal{P}[u]$ and
$\sigma u_x u_y$.
\subsection{Equivalent conserved densities}

The conserved densities displayed above can be simplified up to local
equivalence. Since
\begin{equation*}
K u_x=D_x(Ku)-K_xu,
\end{equation*}
the generic current \eqref{eq:generic-current} and the current
\eqref{eq:f3e-current} have the equivalent conserved density
\begin{equation}
\widehat{T}_K=-K_xu.
\label{eq:equivalent-density-K}
\end{equation}
Similarly, the current \eqref{eq:e2f-current} has the equivalent
conserved density
\begin{equation}
\widehat{T}_{A,K}
=
\frac12A(t)u_x^2-K_xu.
\label{eq:equivalent-density-AK}
\end{equation}

\subsection{Interpretation of the conserved quantities}
\label{subsec:interpretation}

The classification gives a direct interpretation of the conserved
densities admitted by the different parameter branches. In the
generic case, the equivalent density is
\[
\widehat{T}_K=-K_xu,
\]
where
\[
K_x=a_0(t)+a_1(t)y.
\]
Consequently, the only nonzero reduced densities in the generic case
are linear combinations of $u$ and $yu$. In particular, the choice
$K=x$ gives
\[
\widehat{T}_K=-u,
\]
which represents conservation of the total mass, up to an
inessential overall sign. Similarly, the choice $K=xy$ gives
\[
\widehat{T}_K=-yu,
\]
which represents the first transverse moment of the mass
distribution. The choices $K=1$ and $K=y$ satisfy $K_x=0$ and hence
have zero equivalent density. As discussed in
Remark~\ref{rem:zero-density}, these choices yield nontrivial purely
spatial conservation laws, but they do not produce nonzero conserved
integrals through the density.

The branch
\[
g=0,
\qquad
e=2f,
\qquad
f\neq0,
\]
is distinguished by the appearance of a genuinely nonlinear
conserved density. Taking $A=1$ and $K=0$ gives
\[
\widehat{T}_{A,K}=\frac12u_x^2.
\]
The corresponding integral
\[
\mathcal{E}_x[u]
=
\frac12\iint_{\Omega}u_x^2\,dx\,dy
\]
can be interpreted as a longitudinal gradient energy. This density
does not occur in the generic case or in the branch $f=3e$. When
$A'(t)\neq0$, the determining equation
\[
A'(t)+5fK_{xx}=0
\]
forces $K$ to contain a quadratic term in $x$. Consequently,
$-K_xu$ contains a term proportional to $xu$. Thus this branch also
admits time-dependent identities coupling the first longitudinal
moment to the longitudinal gradient energy and to transverse
moment terms.

The branch
\[
g=0,
\qquad
f=3e,
\qquad
e\neq0,
\]
has no derivative-dependent multiplier, but $K$ may be cubic in
$x$. Hence $K_x$ may be quadratic in $x$, and the equivalent density
$-K_xu$ can contain terms proportional to
\[
u,
\qquad
xu,
\qquad
x^2u,
\]
together with their transverse-weighted counterparts. In particular,
this is the only branch in the classified low-order family that
admits a second longitudinal moment density proportional to $x^2u$.
The enlarged eight-function multiplier family can therefore be
viewed as an enlargement of the hierarchy of polynomially weighted
mass-moment conservation laws.

Another structural consequence of the classification is the absence
of the multiplier $Q=u$. Indeed, every multiplier of differential
order at most two admitted by a nonlinear member of the family has
the form
\[
Q=A(t)u_x+K(x,y,t).
\]
Thus $Q=u$ does not occur within the low-order local multiplier class
considered here. Moreover, the equivalent densities obtained in the
generic and $f=3e$ cases have the form
\[
\widehat{T}_K=-K_xu,
\]
while those obtained in the $e=2f$ case have the form
\[
\widehat{T}_{A,K}
=
\frac12A(t)u_x^2-K_xu.
\]
None of these densities contains a term proportional to $u^2$.
Consequently, no $L^2$-type conserved density occurs among the local
conservation laws represented by multipliers of differential order
at most two in the present classification. This conclusion does not
exclude conservation laws arising from higher-order local
multipliers or from potential or nonlocal formulations.

The moment interpretations require the corresponding weighted
integrals to exist. Thus the mass integral requires integrability of
$u$, while the transverse and longitudinal moment integrals require
additional decay of $yu$, $xu$, or $x^2u$, as appropriate. These
global requirements are separate from the local multiplier
classification.
\subsection{Conserved integrals and boundary conditions}
\label{subsec:conserved-integrals}
Let $\Omega\subset\mathbb{R}^2$ be a fixed spatial domain with
piecewise smooth boundary $\partial\Omega$, and let
$\boldsymbol{n}=(n_x,n_y)$ denote the outward unit normal. For any
conserved current satisfying
\begin{equation*}
D_tT+D_xX+D_yY=0
\end{equation*}
on solutions, integration over $\Omega$ gives the balance equation
\begin{equation}
\frac{d}{dt}
\iint_{\Omega}T\,dx\,dy
=
-\int_{\partial\Omega}
\bigl(Xn_x+Yn_y\bigr)\,ds.
\label{eq:global-balance}
\end{equation}
Consequently, whenever the total flux through the boundary vanishes,
\begin{equation*}
\int_{\partial\Omega}
\bigl(Xn_x+Yn_y\bigr)\,ds=0,
\end{equation*}
the quantity
\begin{equation}
\mathcal{C}[u]
=
\iint_{\Omega}T\,dx\,dy
\label{eq:conserved-integral}
\end{equation}
is independent of time.

The zero-flux condition holds, for example, on a bounded domain under
boundary conditions that make the normal flux vanish. It also holds
for periodic solutions when the conserved current itself is
periodic. The multipliers
$K(x,y,t)$ can depend polynomially on $x$ and $y$, and hence the
associated currents need not be periodic even when $u$ is periodic.

On $\mathbb{R}^2$, the zero-flux condition holds when the solution
and its derivatives have sufficient spatial decay for the boundary
flux at infinity to vanish. For the polynomially weighted densities,
the corresponding weighted integrals must also be finite. Thus the
mass density requires integrability of $u$, the first-moment
densities require integrability of $xu$ or $yu$, and the
second-moment densities require integrability of $x^2u$, together
with sufficient decay of the derivatives appearing in the fluxes.

Under these assumptions, the generic and $f=3e$ cases yield conserved
integrals with density \eqref{eq:equivalent-density-K}, while the
$e=2f$ case yields conserved integrals with density
\eqref{eq:equivalent-density-AK}. The local multiplier
classification itself does not depend on these global assumptions.
\section{Conclusion}
\label{sec:conclusion}

We have obtained a complete classification of the low-order local
conservation laws for the nonlinear generalized fifth-order KP family
\eqref{eq:normalized-equation}. The classification concerns
conservation laws that are local in the original dependent variable
$u$ and whose multipliers have total differential order at most two.

The leading-derivative terms in the multiplier determining equation
show, independently of the parameters, that every multiplier of
differential order at most two reduces to first order. The
unrestricted first-order determining system then reduces every
multiplier for a nonlinear member of the family to the form
\[
Q=A(t)u_x+K(x,y,t).
\]
Solving the resulting determining equations yields one generic case
and two exceptional parameter cases.

For generic nonlinear parameter values, the multiplier is zeroth
order and is determined by
\[
K_{xx}=0,
\qquad
K_{xt}+\sigma K_{yy}=0.
\]
The resulting family depends on four arbitrary functions of time.

The first exceptional case is
\[
g=0,
\qquad
e=2f,
\qquad
f\neq0.
\]
In this case, the additional first-order term $A(t)u_x$ is admitted,
and the multiplier family depends on five arbitrary functions of
time. This exceptional case is characterized by the fact that
$\tfrac12u_x^2$ is a conserved density for the corresponding
one-dimensional reduction.

The second exceptional case is
\[
g=0,
\qquad
f=3e,
\qquad
e\neq0.
\]
Here the multiplier remains zeroth order, but $K$ can be cubic in
$x$, yielding a family depending on eight arbitrary functions of
time. The enlargement of the multiplier family follows from the
identity
\[
\mathcal{H}[u]
=
\frac{e}{2}D_x^2(u^2),
\]
which makes the nonlinear part of the equation a fourth total
$x$-derivative.

The two exceptional cases are distinguished by the invariants of the
equivalence transformations,
\[
\left(
\frac{f}{e},
\frac{g}{e^2}
\right)
=
\left(\frac12,0\right)
\qquad\text{and}\qquad
\left(
\frac{f}{e},
\frac{g}{e^2}
\right)
=
(3,0),
\]
respectively. Consequently, they cannot be mapped into one another by
an equivalence transformation of the normalized family.

Representative conserved densities and spatial fluxes have been
derived for all three cases. Up to local equivalence, the generic and
$f=3e$ cases have conserved density
\[
\widehat{T}_K=-K_xu,
\]
while the $e=2f$ case has conserved density
\[
\widehat{T}_{A,K}
=
\frac12A(t)u_x^2-K_xu.
\]
Under boundary conditions that make the total normal flux vanish,
including periodic settings when the conserved current itself is
periodic, or under sufficient weighted spatial decay, these local
conservation laws yield the conserved integrals described in
Section~\ref{sec:conserved-currents}. The
multiplier classification itself is local and does not depend on any
such global assumptions.

The completeness result is restricted to conservation laws whose
multipliers are local in the original dependent variable $u$ and have
differential order at most two. It does not exclude additional
conservation laws arising from higher-order local multipliers,
potential formulations, or nonlocal variables. The classification of
such conservation laws remains a possible direction for future work.

\appendix

\section{The first-order determining system}
\label{app:determining-system}

The determining equation $E_u(Q\Delta)=0$ for the unrestricted
first-order ansatz \eqref{eq:first-order-ansatz} was set up,
expanded, and split symbolically using Maple. This appendix records the coefficients used in the
proof of Lemma~\ref{lem:first-order-reduction}; the remaining
equations of the split system are consequences of these together with
the reductions they impose. As stated in
Section~\ref{sec:multiplier-framework}, the split is taken with
respect to all jet variables, so that monomials involving $u_{tx}$
and its differential consequences occur as independent coordinates.

Splitting with respect to the highest jet monomials gives the
coefficients in Table~\ref{tab:affine}, from which $Q$ is affine in
$u_x,u_y,u_t$ and may be written
\begin{equation}
Q = A\,u_x + B\,u_y + C\,u_t + R,
\qquad
A,B,C,R \ \text{functions of} \ x,y,t,u .
\label{eq:app-affine-form}
\end{equation}

\begin{table}[h]
\centering
\begin{tabular}{ll}
\hline
Jet monomial & Coefficient \\
\hline
$u_{xxx}\,u_{xxxxx}$ & $15\,Q_{u_xu_x}$ \\
$u_{xy}\,u_{xxxxxx}$ & $4\,Q_{u_xu_y}$ \\
$u_{xt}\,u_{xxxxxx}$ & $4\,Q_{u_xu_t}$ \\
$u_{xy}\,u_{xxxxxy}$ & $6\,Q_{u_yu_y}$ \\
$u_{tt}\,u_{xy}$ & $Q_{u_yu_t}$ \\
$u_{xt}\,u_{xxxxxt}$ & $6\,Q_{u_tu_t}$ \\
\hline
\end{tabular}
\caption{Coefficients giving the affine dependence on
$u_x,u_y,u_t$.}
\label{tab:affine}
\end{table}

Substituting \eqref{eq:app-affine-form} and splitting again gives the
coefficients in Table~\ref{tab:cascade}. Each entry is reduced using
the conclusions of all preceding rows. In particular, the coefficients
of $u_{yy}$ and $u_{xxxxxx}$ use the relation $2R_u=B_y$ obtained
from the coefficient of $u_{xt}$.

\begin{table}[h]
\centering
\begin{tabular}{lll}
\hline
Jet monomial & Coefficient & Conclusion \\
\hline
$u_{xxx}\,u_{xxxx}$ & $35\,A_u$ & $A_u=0$ \\
$u_{xxxy}\,u_{xxx}$ & $20\,B_u$ & $B_u=0$ \\
$u_{xxxt}\,u_{xxx}$ & $20\,C_u$ & $C_u=0$ \\
$u_{xxxxxy}$ & $6\,B_x$ & $B_x=0$ \\
$u_{xxxxxt}$ & $6\,C_x$ & $C_x=0$ \\
$u_{yt}$ & $2\sigma\,C_y$ & $C_y=0$ \\
$u_{xt}$ & $-4g\,C\,u\,u_x - B_y + 2R_u$
 & $g\,C=0$, \ $2R_u=B_y$ \\
$u_{yy}$ & $-\sigma\bigl(C_t-2B_y+A_x\bigr)$
 & $C_t-2B_y+A_x=0$ \\
$u_{xxxxxx}$ & $-C_t+5A_x$ & $C_t=5A_x$ \\
$u_{xy}$ & $-4g\,B\,u\,u_x + B_t + 2\sigma A_y$
 & $g\,B=0$, \ $B_t+2\sigma A_y=0$ \\
$u_{xxx}\,u_{xy}$ & $-(e+f)\,B$ & $(e+f)B=0$ \\
$u_{xx}\,u_{xxy}$ & $3(e-f)\,B$ & $(e-f)B=0$ \\
$u_{xxx}\,u_{xt}$ & $-(e+f)\,C$ & $(e+f)C=0$ \\
$u_{xx}\,u_{xxt}$ & $3(e-f)\,C$ & $(e-f)C=0$ \\
\hline
\end{tabular}
\caption{The first-order cascade. Each coefficient is reduced using
the conclusions of the preceding rows.}
\label{tab:cascade}
\end{table}

The three relations
\begin{equation}
2R_u = B_y,
\qquad
C_t - 2B_y + A_x = 0,
\qquad
C_t = 5A_x
\label{eq:app-three-relations}
\end{equation}
give
\begin{equation}
B_y = 3A_x,
\qquad
C_t = 5A_x,
\qquad
R_u = \tfrac32 A_x ,
\label{eq:app-solved-relations}
\end{equation}
and hold for all values of the coefficients. The last four rows of
Table~\ref{tab:cascade}, together with $gB=gC=0$, give $B=C=0$ for
every nonlinear equation, as in the proof of
Lemma~\ref{lem:first-order-reduction}. Then
\eqref{eq:app-solved-relations} gives $A_x=0$ and $R_u=0$, and
$B_t+2\sigma A_y=0$ gives $A_y=0$. With $A_u=0$ this yields
$A=A(t)$ and $R=K(x,y,t)$, which is \eqref{eq:reduced-form}.

\bibliographystyle{plain}
\bibliography{references}

\end{document}